\newcommand{\E}{\mathrm{E}}
\newcommand{\n}{\nonumber}
\newcommand{\nn}{\nonumber\\}
\newcommand{\rd}{\mathrm{d}}
\newcommand{\cU}{\mathcal{U}}
\newcommand{\cX}{\mathcal{X}}
\newcommand{\cI}{\mathcal{I}}
\newcommand{\cO}{\mathcal{O}}
\newcommand{\leg}{\prec}
\newcommand{\geg}{\succ}
\newcommand{\com}{\,,}
\newcommand{\per}{\,.}
\newtheorem{lemma}{Lemma}
\newtheorem{theorem}{Theorem}
\newenvironment{proof1}{\noindent{\it Proof.}}{\vspace{-5.5mm}}
\newenvironment{proof2}[1]{\noindent{\it Proof #1.}}{\vspace{-5.5mm}}
\def\dqed{\relax\tag*{\qed}}
\newcommand{\lprec}[1]{#1-1}
\newcommand{\gprec}[1]{#1-1}
\newcommand{\FG}{F}
\newcommand{\tF}{\tilde{F}}
\newcommand{\bF}{\bar{F}}
\newcommand{\tx}{\tilde{x}}
\newcommand{\als}{\gamma}
\newcommand{\alh}{\gap}
\newcommand{\lh}{l_I}
\newcommand{\ls}{\overline{l}}
\newcommand{\tp}{\tilde{p}}
\newcommand{\gap}{\rho}
\newcommand{\gray}{g}
\newcommand{\precg}{\prec}
\newcommand{\precl}{\prec_{\mathrm{L}}}
\newcommand{\hx}{\hat{x}}
\newcommand{\hl}{\hat{l}}
\newcommand{\hF}{\hat{F}}
\newcommand{\sfeg}{\mathrm{SFEG}}
\newcommand{\oF}{\overline{F}}
\newcommand{\oFI}{\overline{F_I}}
\newcommand{\res}[2]{\langle#1\rangle_{#2}}
\begin{document}
%
\title{Variable-to-Fixed Length Homophonic Coding with a Modified Shannon-Fano-Elias Code}


\author{\IEEEauthorblockN{Junya Honda \qquad Hirosuke Yamamoto}
\IEEEauthorblockA{Department of Complexity Science and Engineering\\
The University of Tokyo,\\
Kashiwa-shi Chiba 277--8561, Japan\\
Email: honda@it.k.u-tokyo.ac.jp, Hirosuke@ieee.org}
}

\maketitle

\begin{abstract}
Homophonic coding is a framework to reversibly convert a message
into a sequence with some target distribution.
This is a promising tool to generate a codeword with
a biased code-symbol distribution,
which is required for capacity-achieving communication by asymmetric channels.
It is known that
asymptotically optimal homophonic coding can be realized
by a Fixed-to-Variable (FV) length code using an interval algorithm
similar to a random number generator.
However, FV codes are not preferable
as a component of channel codes since
a decoding error propagates to all subsequent codewords.
As a solution for this problem
an asymptotically optimal Variable-to-Fixed (VF) length homophonic code, dual
Shannon-Fano-Elias-Gray (dual SFEG) code,
is proposed in this paper.
This code can be interpreted as a dual of
a modified Shannon-Fano-Elias (SFE) code based on Gray code.
It is also shown as a by-product that
the modified SFE code, named SFEG code, 
achieves a better coding rate
than the original SFE code in lossless source coding.
\end{abstract}


%
\IEEEpeerreviewmaketitle

\allowdisplaybreaks[4]

\section{Introduction}
In the communication through asymmetric channels,
it is necessary to use codewords
with a biased code-symbol distribution maximizing
the mutual information between the input and the output
to achieve the capacity.
It is well known that
biased codewords can be generated
from an auxiliary code over an extended alphabet
based on Gallager's nonlinear mapping \cite[p.\,208]{gallager_map},
but its complexity becomes very large
when the target distribution is not expressed
in a simple rational number.

A promising solution to this problem
is to use a dual of lossless coding
where the encoding and the decoding are inverted.
Since a lossless code converts a biased sequence
into an almost uniform compressed sequence,
it is natural to expect that
a decoder of a lossless code can be used to generate
a biased codeword.
This framework is first considered in the literature
of LDPC codes \cite{miyake_ieice}\cite{miyake_channel_general}
and a similar idea is also proposed in polar codes \cite{polar_honda_trans}.
In these schemes fixed-length lossless (LDPC or polar) codes are used
to generate a biased codeword.
A similar scheme based on a fixed-length random number generator
is also found in \cite{channel_random_muramatsu}.

\subsection{Arithmetic Coding as a Biased-codeword Generator}\label{subs_arith}
Whereas such a fixed-length lossless code (or a fixed-length random number generator)
is convenient for theoretical analyses,
it is well known that
variable-length lossless code such as an arithmetic code
practically achieves the almost optimal performance.
Thus, it is natural to replace such fixed-length lossless codes
with variable-length ones.
In fact, coding schemes based on LDPC codes \cite{honda_lossy_trans}\cite{muramatsu_vf}
and polar codes \cite{wang_lossy}
where fixed-length lossless coding
is replaced with arithmetic coding have been proposed
in the context of lossy source coding, which can be regarded
as a dual of channel coding.

Nevertheless, a naive use of an arithmetic decoder
cannot be used as a generator of a biased codeword in channel coding.
To see this,
let us consider the code tree
in Fig.~\ref{tree_arithmetic}
of Shannon-Fano-Elias code (SFE code, see, e.g.~\cite[Sect.~5.9]{cover} for detail)
for
$(X_1,\dots,X_4) \in \{a,b\}^4$
i.i.d.~from
$(P_X(a),P_X(b))=(1/3,2/3)$.
\begin{figure}[t]%
  \begin{center}
   \includegraphics[bb=40 280 525 800,angle=270,clip, width=49mm]{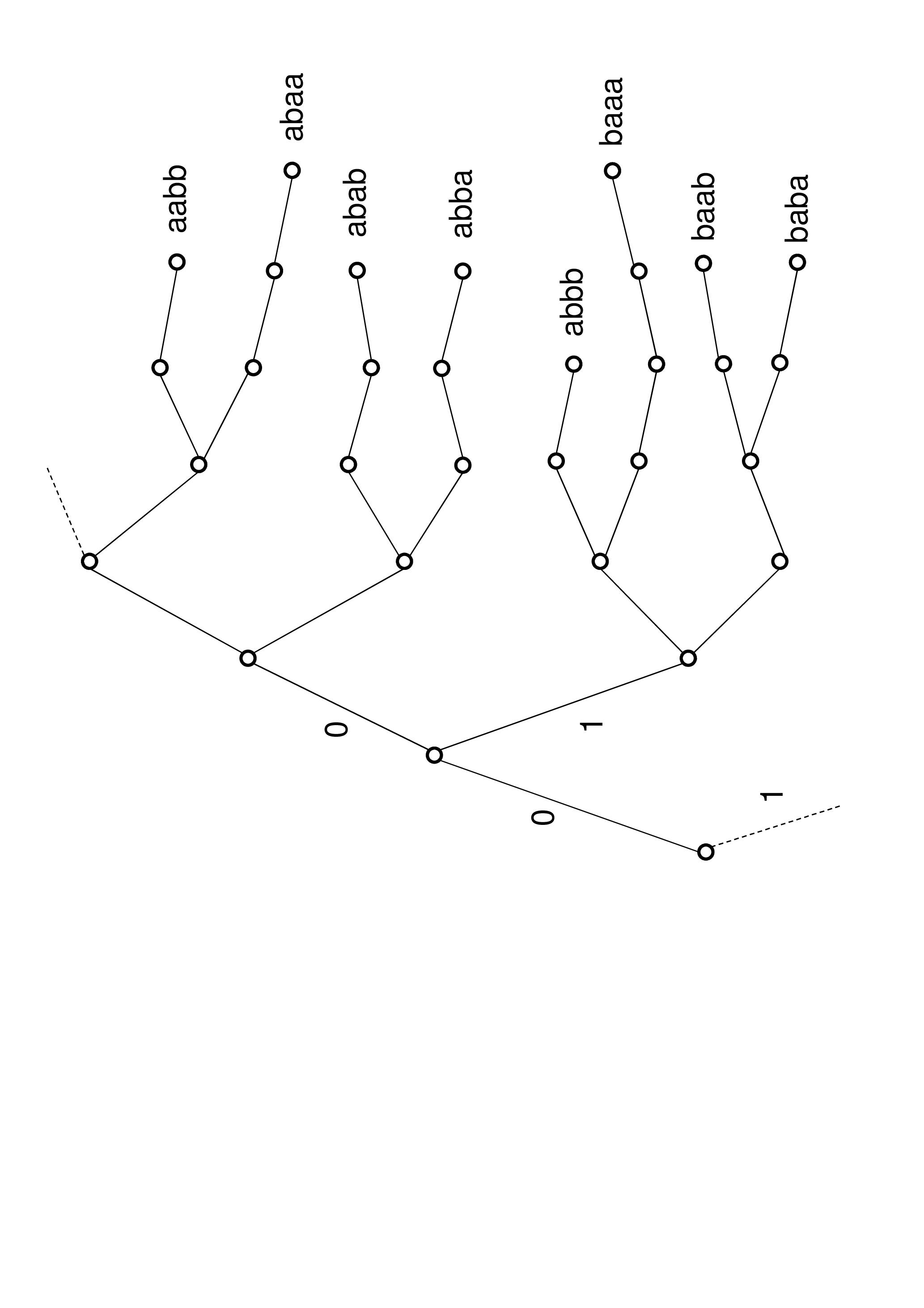}
  \end{center}
  \caption{A code tree of Shannon-Fano-Elias code.}
  \label{tree_arithmetic}
\end{figure}%
\begin{figure}[t]%
  \begin{center}
   \includegraphics[bb=40 280 530 675,angle=270,clip, width=37mm]{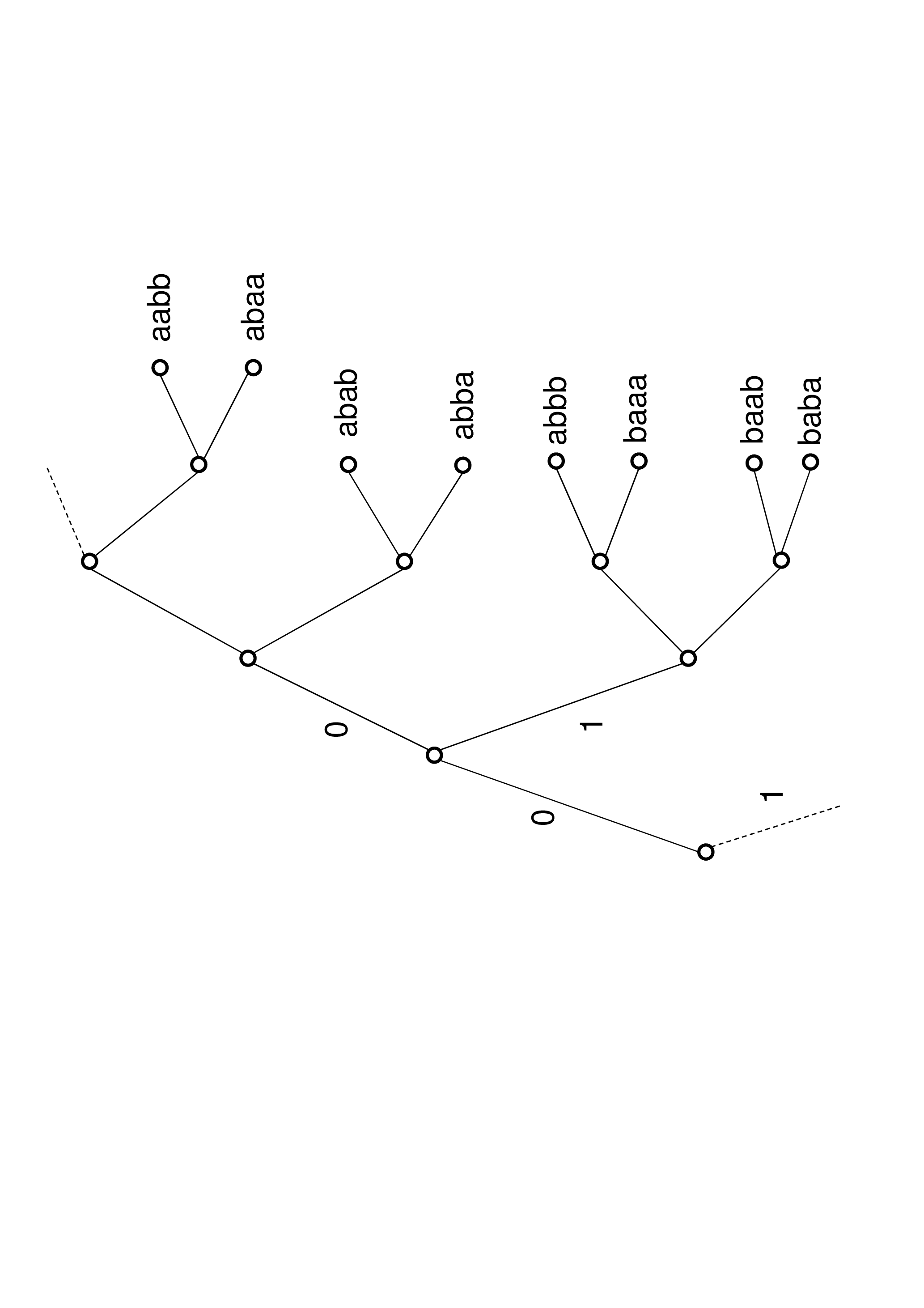}
  \end{center}
  \caption{A code tree of Shannon-Fano-Elias code without redundant edges.}
  \label{tree_arithmetic2}
\end{figure}%
As we can see from the figure,
the code tree of an arithmetic code is generally not complete
and cannot be used as a generator of
a sequence over $\{a,b\}$ from a binary sequence.
For example, the decoder of this code gets ``confused''
if it receives sequence 0111 since such a sequence
never appears as (a prefix of) a codeword of this code.
It is theoretically possible to consider a modified SFE code
where the redundant edges are removed as in Fig.~\ref{tree_arithmetic2}.
However
it is very difficult 
to realize such a code
with a linear complexity 
because
the code tree is practically not stored in the memory
and it is necessary to compute many other codewords
to find redundant edges.

Furthermore,
even if a modified SFE code without redundant edges is realized,
such a code is still not appropriate as a generator of a biased sequence.
For example in the code tree of Fig.~\ref{tree_arithmetic2}, sequence 0101 is converted into $baaa$ and therefore
$baaa$ appears with probability $1/16$
which is roughly 2.5 times larger than the target probability $P(baaa)=2/81$.
Such a problem generally occurs
because there sometimes exists a large gap between
target probabilities between adjacent sequences
under the lexicographic order as in
$P_{X^4}(abbb)$ and $P_{X^4}(baaa)$.
However,
sorting of sequences (as in Shannon-Fano code) is not practical
since the linear-time computability of
the cumulative distribution is lost.

\subsection{Homophonic Codes for Channel Coding}
Homophonic coding is another candidate for a generator of biased codewords.
This is a framework to
convert a sequence with distribution $P_U$ reversibly
into another sequence with the target distribution $P_X$.
In particular, a homophonic code is called {\it perfect}
if the generated sequence exactly follows the target distribution $P_X$.

Hoshi and Han \cite{homophonic_interval} proposed
a Fixed-to-Variable (FV) length perfect homophonic coding
scheme based on an interval algorithm similar to a random number generator \cite{interval}.
This code is applied to generation of biased codewords
in \cite{honda_phd}\cite{wang_memory}
but these FV channel codes suffer the following problem
for practical use.
When we use an FV homophonic code to generate a biased codeword with
block length $n$, an $m$-bit message
is sometimes converted into one block of codeword and
is converted into two blocks another time.
Thus, if a decoding error occurred in one block then the receiver can no more
know where the codeword is separated for each $m$-bit message
and the decoding error propagates to
all the subsequent sequences.
Based on this observation
it is desirable to use a Variable-to-Fixed (VF) length homophonic coding scheme
for a component of a channel code.

\subsection{VF Homophonic Coding by Gray Code}
Although it is difficult to realize a perfect VF homophonic code,
we can relax the problem
when we consider application
to channel coding.
Since homophonic coding is first considered in the context of
cryptography \cite{gunther},
the target distribution $P_X$ is usually uniform and
there is a special meaning to be perfect, that is,
the output is exactly random.
On the other hand in application to channel coding,
the output codeword does not have to perfectly follow the
target distribution
and it suffices to assure that
a bad codeword does not appear too frequently.

Keeping this difference of application in mind,
we propose a new VF homophonic code, dual SFEG code.
This code corresponds to a dual of
a modified SFE code
based on Gray code \cite{gray_patent}\cite{grey},
which we call SFEG code.
In SFEG code,
the cumulative distribution is defined
according to the order induced by Gray code
instead of the lexicographic order.
Under this order we can assure
linear-time computability of the cumulative distribution
and a small gap of probabilities between adjacent sequences.
Based on this property we prove that the dual SFEG code
is asymptotically perfect
and its coding rate is also asymptotically optimal.
We also prove as a by-product that
SFEG code for lossless compression
achieves a better coding rate 
than the original SFE code.


\section{Preliminaries}\label{sec_pre}
We use superscript $n$ to denote an $n$-bit sequence
such as $x^n=(x_1,x_2,\cdots,x_n)\in\cX^n$ for some alphabet $\cX$.
A subsequence is denoted by $x_i^j=(x_i,x_{i+1},\cdots,x_j)$ for $i\le j$.
Let $X^n=(X_1,X_2,\cdots,X_n) \in\cX^n$ be a discrete i.i.d.~sequence,
whose probability mass function is denoted by
$p(x)=\Pr[X_i=x],\,i=1,2,\cdots,n,$ and $p(x^n)=\Pr[X^n=x^n]=\prod_{i=1}^n p(x_i)$.
For notational simplicity we consider the case $\cX=\{0,1\}$
and assume $p(0),p(1)\in (0,1)$ and $p(0)\neq p(1)$.
The maximum relative gap
between probability masses is denoted by
$\gap=\max\{p(0)/p(1), p(1)/p(0)\}\in(1,\infty)$.
The cumulative distribution function is defined as
\begin{align}
F(x^n)=\sum_{a^n \prec x^n} p(x^n)\n
\end{align}
under some total order\footnote{In this paper we always write $x\prec y$
including the case $x=y$.}
$\prec$ over $\{0,1\}^n$.
We write $x^n+1$ for the next sequence to $x^n$, that is,
the smallest sequence $y^n$ such that
$x^n\precneqq y^n$. Sequence $x^n-1$ is defined
in the same way.

\subsection{Shannon-Fano-Elias Code}
Shannon-Fano-Elias (SFE) code is a lossless code
which encodes input $x_{(i)}^n,\,i=1,2,\cdots,$ into
$\phi_{\mathrm{SFE}}(x_{(i)}^n)=\break
\lfloor F(\lprec{x_{(i)}^n})+p(x_{(i)}^n)/2\rfloor_{\lceil -\log p(x_{(i)}^n)\rceil+1}$,
where $\lfloor r \rfloor_{l}$ for $l\in \mathbb{N}$
is the first $l$ bits of the binary expansion of $r\in [0,1)$.
When we define the cumulative distribution function $F$ for the lexicographic order
this code can be encoded and decoded with a complexity linear in $n$.
The expected code length satisfies
\begin{align}
\E[|\phi_{\mathrm{SFE}}(X^n)|]
&=
\sum_{x^n}p(x^n)\left(
\lceil \log p(x^n)\rceil+1
\right)
\nn&
<
n H(X)+2\com\label{upper_sfeg}
\end{align}
where $|u|$ for $u\in\{0,1\}^*$ denotes
the length of the sequence.

\subsection{Gray Code}

Gray code $\gray(\cdot)$ is
a one-to-one map over $n$-bit sequences.
This is the
XOR operation of
the input sequence and its one-bit shift to the right.
For example,
$g(0110)=0110\oplus 0011=0101$ and
$g(1101)=1101\oplus 0110=1011$,
where $\oplus$ is the bit-wise addition over
$\mathrm{GF}(2)$.
Table \ref{tab_gray} shows the output of Gray code for 3-bit sequences.
\begin{table}[b]%
\caption{Gray Code.%
}%
\begin{center}%
\label{tab_gray}%
\begin{tabular}{c|cccccccc}%
$x^n$&000&001&010&011&100&101&110&111\\
\hline
$\gray(x^n)$&000&001&011&010&110&111&101&100\\
\end{tabular}%
\end{center}
\end{table}%
The most important property of Gray code is that
if $x^n$ and $y^n$ are adjacent in the lexicographic order then
$g(x^n)$ and $g(y^n)$ differ only in one bit
as seen from the table.

We define {\it Gray order} $\precg$ as the total order
induced by Gray code, that is,
$x^n \precg y^n$ if and only if
$\gray^{-1}(x^n)\precl \gray^{-1}(y^n)$ where
$\precl$ is the lexicographic order.
For example, we have
$000\precg001\precg011\precg010\precg110\precg111\precg101\precg100$ from Table \ref{tab_gray}.
Gray order is represented in
a recursive way
\begin{align}
x^n\leg y^n
&\Leftrightarrow
\{x_1 \precneqq_{\mathrm{L}} y_1\}\cup
\{x_1=y_1=0,\,x_2^n \leg y_2^n\}\nn
&\phantom{wwwwwwwwc}\cup
\{x_1=y_1=1,\,y_2^n \leg x_2^n\}\per\n
\end{align}
From this expression
the cumulative distribution of $X^n$ under Gray order
can be computed in a linear time in $n$ as follows.
\begin{align}
\Pr[X^n\leg x^n]
&=
\begin{cases}
p(0)\Pr[X_2^n\leg x_2^n],&x_1=0,\\
p(0)+p(1)\Pr[X_2^n\geg x_2^n],&x_1=1,\\
\end{cases}\nn
\Pr[X^n\geg x^n]
&=
\begin{cases}
p(1)+p(0)\Pr[X_2^n\geg x_2^n],&x_1=0,\\
p(1)\Pr[X_2^n\leg x_2^n],&x_1=1.\\
\end{cases}\n
\end{align}

In the following we always assume that $x^n\in\{0,1\}^n$
is aligned by Gray order and
write $F(x^n)=\Pr[X^n\leg x^n]$ for the
cumulative distribution function under this order.
Similarly to the computation of $F(\cdot)$ we can show that
its inverse
$F^{-1}(r)=\min\{x^n: F(x^n)>r\}$
is also computed in a linear time.
From the property of Gray code we always have
\begin{align}
1/\gap\le p(x^n-1)/p(x^n)\le \gap\per\label{prop_gray}
\end{align}

\subsection{Homophonic Coding}
Let $\cU$ and $\cX$ be the input and output alphabets of sequences.
Let $\cI\subset \cU^*$ be a set such that
for any sequence $u^{\infty}\in \cU^{\infty}$
there exists $m>0$ such that $u_1^m\in \cI$.
A homophonic code $\phi$ is a (possibly random) map from
$\cI\subset \cU^*$ onto $\cO\subset \cX^*$.
%
A homophonic code is called {\it perfect} with respect to
the pair of distributions $(P_U,\,P_X)$
if
$\phi(U_{(1)}^{m_1})\phi(U_{(2)}^{m_2})\cdots$ is i.i.d.~from $P_X$ for 
the input $U_{(1)}^{m_1}U_{(2)}^{m_2}\cdots$ i.i.d.~from $P_U$.
We define that a homophonic code
is {\it weakly $\delta$-perfect}
in the sense of max-divergence
if
\begin{align}
\limsup_{k\to\infty}
\frac1k
\sup_{x^k: P_{\tilde{X}^k}(x^k)>0}
\log\frac{P_{\tilde{X}^k}(x^k)}{P_{X^k}(x^k)}\le 
\delta \com\label{def_weak}
\end{align}
where
$\tilde{X}^k$ is
the first $k$ symbols of the sequence
$\phi_n(U_{(1)}^{m_1})\phi_n(U_{(2)}^{m_2})\cdots$.
We call this notion ``weakly'' perfect
since
a perfect homophonic code is a code
satisfying the condition such that
$\delta=0$ and
$\limsup_{k\to\infty}$ is replaced with $\sup_{k\in\mathbb{N}}$
in \eqref{def_weak}.

A weakly perfect homophonic code
can be used as a component of a capacity-achieving
channel code in the following way.
Assume that there exists
a VF weakly $\delta$-perfect
homophonic code with output length $n$
and a channel code with block length $n'$ such that
the decoding error probability is $\epsilon_{n'}$ satisfying $\lim_{n'\to\infty}\epsilon_{n'}=0$
under some ideal codeword distribution.
Since the decoding error probability of the $n$-block sequence
of the channel codewords is at most $n\epsilon_{n'}$
under the ideal distribution, the decoding error probability
of the sequence generated by the VF homophonic code
is roughly bounded by $n\epsilon_{n'}2^{n\delta}$.
Thus the decoding error probability can be arbitrarily small
when $n'$ is sufficiently large with respective to $n$.
Based on this argument we can easily construct a VF channel
code achieving the capacity by, e.g., replacing the FV homophonic
code used as a component for the capacity-achieving
channel code in \cite{wang_memory} with such a weakly $\delta$-perfect
VF homophonic code.
In this paper we construct a weakly $\delta_n$-perfect
VF homophonic code
such that $\delta_n=O(1/n)$.

\section{Shannon-Fano-Elias-Gray Code}\label{sec_sfeg}
In this section
we propose Shannon-Fano-Elias-Gray (SFEG) code
as a simple modification of SFE code.
This encodes $x_{(i)}^n$ 
into 
$\phi_{\mathrm{SFEG}}(x_{(i)}^n)
=\lfloor F(\gprec{x_{(i)}^n})+ p(x_{(i)}^n)/2\rfloor_{\ls(x_{(i)}^n)}$,
where
$\ls(x^n)=\lceil -\log \als p(x^n)\rceil+1$ and $\als=(1+\gap)/\gap\in (1,2)$.
There are only two differences from the SFE encoder:
the cumulative distribution $F$ is defined by Gray order
and there is a factor $\als$ in the
code length $\ls(x^n)$.
The decoding of this code is given in Algorithm \ref{dec_sfeg}.
Here by abuse of notation
we sometimes identify
$\lfloor r\rfloor_l$, the first $l$ bits of the binary expansion of $r$,
with the real number $2^{-l}\lfloor 2^l r \rfloor$.

\begin{algorithm}[t]
\caption{Decoding of SFEG Code}\label{dec_sfeg}
\begin{algorithmic}[1]
\sonomama{Input:} Received sequence $u^{\infty}\in \{0,1\}^{\infty}$.
\STATE $i:=1,\,j:=1$.
\LOOP
 \STATE $r:=0.u_{j}u_{j+1}\cdots$ and $\hx^n:=F^{-1}(r)$.
 \STATE $\hl:=\ls(\hx^n)$ and $\hF=\lfloor \FG(\gprec{\hx^n})+p(\hx^n)/2\rfloor_{\hl}$.
 \IF{$r\ge \hF+2^{-\hl}$}
  \STATE $\tx^n:=\hx^n+1$.
 \ELSIF{$r< \hF$}
  \STATE $\tx^n:=\hx^n-1$.
 \ELSE
  \STATE $\tx^n:=\hx^n$.
 \ENDIF
 \STATE Output $\tx_{(i)}^n:=\tx^n$.
 \STATE $i:=i+1,\,j:=j+\ls(\tx^n)$.
\ENDLOOP
\end{algorithmic}
\end{algorithm}%

\begin{theorem}\label{thm_sfeg}
SFEG code is uniquely decodable.
Furthermore,
the average code length satisfies
\begin{align}
\E[|\phi_{\mathrm{SFEG}}(X^n)|]&<
nH(X)+2-\log
((1+\gap)/\gap)\per
\label{l_sfeg}
\end{align}
\end{theorem}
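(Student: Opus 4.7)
The plan has two parts: prove the code is prefix-free (from which unique decodability of this FV code and the correctness of Algorithm~\ref{dec_sfeg} both follow), and bound the expected code length. The length bound is routine; the main technical work is in the prefix-freeness argument.

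The key step is to show that for every $x^n$, writing $l_1:=\ls(\lprec{x^n})$, $l_2:=\ls(x^n)$, $m_1:=F(\lprec{x^n})-p(\lprec{x^n})/2$, and $m_2:=F(\lprec{x^n})+p(x^n)/2$ (so that $c_{\lprec{x^n}}:=\lfloor m_1\rfloor_{l_1}$ and $c_{x^n}:=\lfloor m_2\rfloor_{l_2}$ denote the two codewords viewed as binary fractions), one has $c_{x^n}\ge c_{\lprec{x^n}}+2^{-l_1}$. I would split into two cases according to the order of $l_1$ and $l_2$. When $l_1\ge l_2$, any multiple of $2^{-l_2}$---including $c_{x^n}$---is a multiple of $2^{-l_1}$, and $c_{\lprec{x^n}}+2^{-l_1}$ is by construction the smallest multiple of $2^{-l_1}$ strictly above $m_1$, so it suffices to check $c_{x^n}>m_1$. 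This follows from the strict truncation bound $c_{x^n}>m_2-2^{-l_2}$ combined with $m_2-m_1=(p(x^n)+p(\lprec{x^n}))/2\ge 2^{-l_2}$, and the latter inequality reduces, upon inserting $2^{-l_2}\le\als p(x^n)/2$, to $p(\lprec{x^n})\ge p(x^n)/\gap$, which is exactly \eqref{prop_gray}. When $l_2>l_1$, conversely, $c_{\lprec{x^n}}+2^{-l_1}$ is itself a multiple of $2^{-l_2}$ while $c_{x^n}$ is the largest multiple of $2^{-l_2}$ at most $m_2$, so it suffices to verify $c_{\lprec{x^n}}+2^{-l_1}\le m_2$; this reduces via $c_{\lprec{x^n}}+2^{-l_1}\le m_1+2^{-l_1}$ to $2^{-l_1}\le m_2-m_1$, and the last inequality follows from $2^{-l_1}\le\als p(\lprec{x^n})/2$ together with the symmetric form $p(x^n)\ge p(\lprec{x^n})/\gap$ of \eqref{prop_gray}.

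Once the codeword intervals $[c_{x^n},c_{x^n}+2^{-l_2})$ are shown disjoint, unique decodability of the FV code is immediate. Correctness of Algorithm~\ref{dec_sfeg} then follows because, when $x^n$ is transmitted, the received tail $r=0.u_ju_{j+1}\cdots$ must lie in this interval, which in turn lies in $[F(\lprec{x^n})-p(x^n)/(2\gap),F(x^n)+p(x^n)/(2\gap))$ by the same truncation analysis. Applying \eqref{prop_gray} once more, the overflow $p(x^n)/(2\gap)$ on each side is at most half the probability mass of the adjacent sequence, so $\hx^n=\FG^{-1}(r)\in\{\lprec{x^n},x^n,\lsucc{x^n}\}$, and in each of the three cases the disjointness of the codeword intervals forces the correct branch of the algorithm to fire, yielding $\tx^n=x^n$.

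The length bound is immediate from $\ls(x^n)<-\log(\als p(x^n))+2$, which gives $\E[|\phi_{\mathrm{SFEG}}(X^n)|]=\sum_{x^n}p(x^n)\ls(x^n)<nH(X)+2-\log\als=nH(X)+2-\log((1+\gap)/\gap)$. The hardest step will be the case analysis above for prefix-freeness; both subcases critically depend on \eqref{prop_gray}, and this is precisely where Gray order is essential, since under the lexicographic order the ratio between adjacent probabilities is not bounded by any constant and the analogous overflow bounds would fail.
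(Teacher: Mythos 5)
Your proof is correct and follows essentially the same route as the paper's: the crux is that $(p(x^n)+p(\lprec{x^n}))/2\ge 2^{-\min\{\ls(\lprec{x^n}),\ls(x^n)\}}$ via \eqref{prop_gray}, which gives $c_{x^n}\ge c_{\lprec{x^n}}+2^{-\ls(\lprec{x^n})}$, combined with the observation that $\hx^n=F^{-1}(r)$ can deviate from the true $x^n$ by at most one position in Gray order. The only presentational differences are that you state prefix-freeness as an intermediate claim and inline the $l_1\lessgtr l_2$ case analysis, whereas the paper factors that case analysis into Lemma~\ref{lem_keta} and verifies the decoding algorithm directly by a three-way split on where $r$ lands.
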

From this theorem we see that
the upper bound on the average code length of SFEG code improves that of
SFE code in \eqref{upper_sfeg} by $\log (1+\gap)/\gap\in(0,1)$.

We prove this theorem by the following lemma.
\begin{lemma}\label{lem_keta}
If $x+2^{-l'}\ge x'+2^{-\min\{l,l'\}}$ then
$\lfloor x\rfloor_l \ge \lfloor x'\rfloor_{l'}$.
\end{lemma}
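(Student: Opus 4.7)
My plan is to translate the statement into an inequality between the integer-truncated real numbers $a := 2^{-l}\lfloor 2^l x\rfloor$ and $b := 2^{-l'}\lfloor 2^{l'} x'\rfloor$, which are the real-number representations of $\lfloor x\rfloor_l$ and $\lfloor x'\rfloor_{l'}$ under the identification introduced just before the SFEG decoder. The two elementary facts driving the argument are $a > x - 2^{-l}$, coming from the strict upper bound in $a \le x < a + 2^{-l}$, and $b \le x'$.

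First I would combine these with the hypothesis $x + 2^{-l'} \ge x' + 2^{-\min\{l,l'\}}$, substituting the lower bound on $a$ and then $x' \ge b$, to obtain
\begin{align}
a > b + 2^{-\min\{l,l'\}} - 2^{-l'} - 2^{-l}. \nonumber
\end{align}
A short case split on whether $l \le l'$ or $l > l'$ collapses the right-hand side to $b - 2^{-\max\{l,l'\}}$ in both cases.

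The final step is to observe that $a$ is an integer multiple of $2^{-l}$ and $b$ is an integer multiple of $2^{-l'}$, so in either case both are integer multiples of the finer granularity $2^{-\max\{l,l'\}}$. Since consecutive multiples of this unit are exactly $2^{-\max\{l,l'\}}$ apart, a \emph{strict} inequality $a > b - 2^{-\max\{l,l'\}}$ between two such multiples forces $a \ge b$, which is the conclusion.

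The one point I would watch carefully is keeping the inequality strict throughout, since weakening any $>$ to $\ge$ would admit the borderline case $a = b - 2^{-\max\{l,l'\}}$ and break the last step. Apart from this bookkeeping the lemma is just a direct unfolding of the truncation definition, so I do not anticipate any conceptual obstacle beyond the case analysis.
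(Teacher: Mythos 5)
Your proof is correct. Both the bound $a>x-2^{-l}$ (strict) and $b\le x'$ are accurate consequences of the truncation definition, the substitution chain is sound, the two cases $l\le l'$ and $l>l'$ both collapse the right side to $b-2^{-\max\{l,l'\}}$, and the final integrality step is valid because $a$ and $b$ are both integer multiples of $2^{-\max\{l,l'\}}$, so a strict inequality $a>b-2^{-\max\{l,l'\}}$ between such multiples indeed forces $a\ge b$.

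The paper organizes the same underlying mathematics a bit differently: it shows that the hypothesis is \emph{equivalent} to the disjunction $\{l\ge l',\,x\ge x'\}\cup\{l<l',\,x\ge x'+2^{-l}-2^{-l'}\}$, and then asserts (without spelling it out) the implication from each disjunct to $\lfloor x\rfloor_l\ge\lfloor x'\rfloor_{l'}$. That unjustified first implication is exactly what your integrality argument verifies; in that sense your proof supplies the detail the paper leaves to the reader. Your version also has the minor structural advantage of doing only one chain of inequalities and confining the case split to the single arithmetic step that collapses the $\min/\max$, rather than splitting the entire implication into cases. Either route is fine, and you correctly identified the one delicate point (keeping $a>x-2^{-l}$ strict so the last multiple-of-$u$ step does not admit equality at the boundary).
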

\begin{proof1}
This lemma is straightforward from
\begin{align}
\lefteqn{
\lfloor x\rfloor_l \ge
\lfloor x'\rfloor_{l'}
}\nn
&\Leftarrow
\{l\ge l',\,x\ge x'\}\cup
\{l< l',\,x\ge x'+2^{-l}-2^{-l'}\}
\nn
&\Leftrightarrow
\{l\ge l',\,x+2^{-l'}\ge x'+2^{-l'}\}\nn
&\qquad\cup
\{l< l',\,x+2^{-l'}\ge x'+2^{-l}\}\nn
&\Leftrightarrow
x+2^{-l'}\ge x'+2^{-\min\{l,l'\}}\per
\dqed
\n
\end{align}
\end{proof1}

\begin{proof}[Proof of Theorem \ref{thm_sfeg}]
Eq.~\eqref{l_sfeg} holds since
\begin{align}
\E[|\phi_{\mathrm{SFEG}}(X^n)|]
&=
\sum_{x^n}p(x^n)
(\lceil -\log \als p(x^n)\rceil+1)
\nn
&<
\sum_{x^n}p(x^n)
(-\log \als p(x^n)+2)\nn
&=
nH(X)+2-\log
((1+\gap)/\gap)\per\n
\end{align}

We prove the unique decodability
by showing that
$\tx_{(i)}^n=x_{(i)}^n$ holds in Step \ref{xin} of Algorithm \ref{dec_sfeg}.
Let $x^n=x_{(i)}^n$ and $G=\phi_{\sfeg}(x^n)=\lfloor F(\gprec{x^n})+p(x^n)/2\rfloor_{\ls(x^n)}$.
Then
$r\in [G,G+2^{-\ls(x^n)})$ holds
from the encoding algorithm.
From the decoding algorithm, 
if $r \in [F(x^n-1),F(x^n))$ then $(\hx^n,\,\tx^n)$
given in Algorithm \ref{dec_sfeg} satisfies
$\hx^n=\tx^n=x^n$ and
we consider the other case in the following.

First we consider the case
$r \in [G, F(x^n-1))$.
Since 
$\hx^n\precneqq x^n$ in this case,
$\tx^n=x^n$ is equivalent to
$\{\hx^n=x^n-1,\,r\ge \hF+2^{-\hl}\}$,
where $\hF$ is given in Algorithm \ref{dec_sfeg}.
The former equality $\hx^n=x^n-1$ holds
since
\begin{align}
r\ge G
&\ge
F(x^n-1)+p(x^n)/2-2^{\log \als p(x^n)-1}\nn
&=
F(x^n-2)+p(x^n-1)+(1-\als)p(x^n)/2\nn
&\ge
F(x^n-2)+p(x^n)/\gap+(1-\als)p(x^n)/2\nn
&=
F(x^n-2)+p(x^n)/(2\rho)\nn
&\ge F(x^n-2)\per\n
\end{align}
We obtain the latter inequality $r\ge \hF+2^{-\hl}$
by letting $\bF(x^n)=F(x^n-1)+p(x^n)/2$
and using Lemma \ref{lem_floor}
since
\begin{align}
\lefteqn{
r\ge \hF+2^{-\hl}
}\nn
&\Leftarrow
\lfloor \bF(x^n)\rfloor_{\ls(x^n)}
\ge
\lfloor \bF(x^n-1)\rfloor_{\ls(x^n-1)}
+
2^{-\ls(x^n-1)}\nn
&\Leftarrow
\bF(x^n)-\bF(x^n-1)
\ge
2^{-\min\{\ls(x^n-1),\ls(x^n)\}}\nn
&\Leftrightarrow
(p(x^n-1)+p(x^n))/2
\ge
2^{-1-\lceil-\log \als\max\{p(x^n-1),p(x^n)\}\rceil}\nn
&\Leftarrow
(1+1/\gap)\max\{p(x^n-1),p(x^n)\}/2\nn
&\phantom{wwwwwwwwwwwwwwii}\ge
\als\max\{p(x^n-1),p(x^n)\}/2\nn
&\Leftrightarrow
1+1/\gap
\ge
\als
\per\n
\end{align}

Finally we consider the remaining case $r \in [F(x^n),\allowbreak G+2^{-\ls(x^n)})$.
In the same way as the former case we can show $\{\hx^n=x^n+1,\,r< \hF\}$,
which implies $\tx^n=x^n$.
\end{proof}

\section{Dual SFEG Code}\label{sec_dsfeg}
In this section we construct a VF homophonic code
based on SFEG code, which we call the dual SFEG code.
The main difference from SFEG code
is that we use the transformed cumulative distribution function
$F_I(x^n)=a+(b-a)F(x^n)$ for an interval $I=[a,b)\subset [0,1)$.
As we show later, the real number $r$ corresponding to the message sequence at each iteration
is uniformly distributed over an interval $I$ that is generally different from $[0,1)$.
By using the transformed function $F_{I}$ the output distribution becomes close to $P_{X^n}$.
Let
$\res{r}{l}=2^{l}(r-\lfloor r\rfloor_l)\in[0,1)$
be the real number corresponding to the $(l+1,l+2,\cdots)$-th bits
of $r\in[0,1)$.
The encoding and decoding of the dual SFEG code,
which are similar to the decoding and encoding
of SFEG code,
are given in
Algorithms \ref{enc_dsfeg} and \ref{dec_dsfeg},
respectively,
where
\begin{align}
\oFI(x^n)&=\lfloor F_I(x^n-1)+2^{-\lh(x^n)}\rfloor_{\lh(x^n)}\n
\end{align}
for $\lh(x^n)=\lfloor -\log \alh(b-a) p(x^n) \rfloor$ and $I=[a,b)$.
\begin{algorithm}[t]%
\caption{Encoding of Dual SFEG Code}\label{enc_dsfeg}
\begin{algorithmic}[1]
\sonomama{Input:} Message $u^{\infty}\in \{0,1\}^{\infty}$.
\STATE $i:=1,\,j:=1,\,I:=[0,1)$.
\LOOP 
 \STATE $r:=0.u_{j}u_{j+1}\cdots,\,\hx^n:=F_I^{-1}(r)$. \label{step_r}
 \IF{$r\ge \oFI(\hx^n)$}
  \STATE $\tx^n:=\hx^n+1$.
 \ELSE
  \STATE $\tx^n:=\hx^n$.
 \ENDIF
 \STATE Output $\tx_{(i)}^n:=\tx^n$.\label{xin}
 \STATE $i:=i+1,\,j:=j+\lh(\tx^n)$.\label{step}
 \STATE $I:=\big[\res{\min\{\oF_{I}(\tx^n-1),F_{I}(\tx^n-1)\}}{\lh(\tx^n)},\qquad\quad\break
 \phantom{wwwwwwwwwa}\res{\min\{\oF_{I}(\tx^n),F_{I}(\tx^n)\}}{\lh(\tx^n)}\big)$.
\ENDLOOP
\end{algorithmic}
\end{algorithm}%
\begin{algorithm}[t]%
\caption{Decoding of Dual SFEG Code}\label{dec_dsfeg}
\begin{algorithmic}[1]
\sonomama{Input:} Received sequences $\tx_{(1)}^n,\,\tx_{(2)}^n,\cdots\in\{0,1\}^n$.
\STATE $i:=1,\,j:=1,\,I:=[0,1)$.
\LOOP 
 \STATE Output
$\hat{u}_{j}^{j+\lh(\tx_{(i)}^n)-1}:=\lfloor F_I(\tx^n-1)\rfloor_{l(\tx_{(i)}^n)}$.
 \STATE $I:=\big[\res{\min\{\oF_{I}(\tx^n-1),F_{I}(\tx^n-1)\}}{\lh(\tx^n)},\qquad\quad\break
 \phantom{wwwwwwwwww}\res{\min\{\oF_{I}(\tx^n),F_{I}(\tx^n)\}}{\lh(\tx^n)}\big)$.
 \STATE $i:=i+1,\,j:=j+\lh(\tx_{(i)}^n)$.
\ENDLOOP
\end{algorithmic}
\end{algorithm}%
\begin{theorem}\label{thm_dual}
Dual SFEG code for uniform input $U^{\infty}$
with outputs $\tilde{X}_{(1)}^n,\tilde{X}_{(2)}^n\cdots$
is (i) uniquely decodable,
(ii) weakly $(\frac1n\log2\alh)$-perfect
and
(iii) the average input length satisfies
\begin{align}
\E[\lh(\tilde{X}^n)]
>nH(X)-1-2\log \gap\per\n
\end{align}
\end{theorem}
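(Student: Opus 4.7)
My plan is to handle the three assertions of Theorem~\ref{thm_dual} separately, in the spirit of the proof of Theorem~\ref{thm_sfeg}. For part~(i), the strategy is to show that the set of input tails $r=0.u_j u_{j+1}\cdots\in[0,1)$ that the encoder (Algorithm~\ref{enc_dsfeg}) maps to output $\tx^n=x^n$ is contained in a single dyadic interval of the form $[\lfloor F_I(x^n-1)\rfloor_{\lh(x^n)},\,\lfloor F_I(x^n-1)\rfloor_{\lh(x^n)}+2^{-\lh(x^n)})$. Granting this, the first $\lh(x^n)$ bits of any such $r$ equal $\lfloor F_I(x^n-1)\rfloor_{\lh(x^n)}$, which is exactly the quantity that the decoder (Algorithm~\ref{dec_dsfeg}) emits, so $\hat{u}_j^{j+\lh-1}=u_j^{j+\lh-1}$. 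The verification reduces to the identity $\oFI(x^n)=\lfloor F_I(x^n-1)\rfloor_{\lh(x^n)}+2^{-\lh(x^n)}$ together with a case analysis on whether the $\min\{\cdot,F_I\}$ clipping is active at either endpoint, invoking Lemma~\ref{lem_keta} to compare rounded quantities at the possibly distinct resolutions $\lh(x^n-1)$ and $\lh(x^n)$.

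For part~(ii), the key per-block estimate $\Pr[\tilde{X}^n_{(i)}=x^n\mid I]\le 2^{-\lh(x^n)}$ is immediate from (i), since the $r$-interval for $x^n$ is contained in a dyadic cell of length $2^{-\lh(x^n)}$. Combining this with $2^{-\lh(x^n)}<2\gap(b-a)p(x^n)\le 2\gap\,p(x^n)$ (the second inequality using $b-a\le 1$) gives the per-block ratio $\Pr[\tilde{X}^n_{(i)}=x^n\mid I]/p(x^n)\le 2\gap$. Multiplying over the $m=\lfloor k/n\rfloor$ complete blocks inside $\tilde{X}^k$ yields $\log[P_{\tilde{X}^k}(x^k)/P_{X^k}(x^k)]\le m\log 2\gap + O(n)$, and dividing by $k$ and taking $\limsup_{k\to\infty}$ gives the required $\delta=(1/n)\log 2\gap$.

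For part~(iii), I would start from the floor bound $\lh(x^n)>-\log\gap(b-a)p(x^n)-1$ and take expectation to obtain
\begin{align}
\E[\lh(\tilde{X}^n)] > -\log\gap-\E[\log(b-a)]+\E[-\log p(\tilde{X}^n)]-1. \nonumber
\end{align}
Since $b-a\le 1$, the term $-\E[\log(b-a)]$ is nonnegative, so it suffices to establish $\E[-\log p(\tilde{X}^n)]\ge nH(X)-\log\gap$. Summing the bound from (ii) over $x^n$ gives a Kraft-type inequality $\sum_{x^n}2^{-\lh(x^n)}\ge 1$, which combined with $2^{-\lh(x^n)}<2\gap(b-a)p(x^n)$ forces $(b-a)>1/(2\gap)$, so the interval lengths remain bounded away from zero. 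The main obstacle is then to leverage this lower bound together with the per-block ratio $P^I/p\le 2\gap$ to conclude $\E[-\log p(\tilde{X}^n)]\ge nH(X)-\log\gap$ even though $P^I$ can both skew away from and assign zero mass relative to $p$; the extra $\log\gap$ in the final bound (yielding $-2\log\gap$ rather than $-\log\gap$) is expected to absorb precisely this skew arising from the dyadic-grid structure of the encoding.
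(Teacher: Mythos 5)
Your treatment of parts (i) and (ii) follows essentially the same path as the paper. For (i) you reduce to showing that the $r$-set assigned to each output $x^n$ sits inside the single dyadic cell $[\lfloor F_I(x^n-1)\rfloor_{\lh(x^n)},\,\oFI(x^n))$, which is exactly the content of Lemma~\ref{lem_floor} (via~\eqref{lem_include}), and then read off the emitted bits; this is the paper's argument. For (ii) you arrive at the same per-block bound $\tp_{(i)}(x^n)<2\alh p(x^n)$ and conclude $\delta=\frac1n\log 2\alh$ as in the paper. One small slip: you state $\Pr[\tilde{X}_{(i)}^n=x^n\mid I]\le 2^{-\lh(x^n)}$, but since $r$ is uniform on an interval of length $b-a\le 1$, the correct bound is $2^{-\lh(x^n)}/(b-a)$, which is \emph{larger}. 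Your conclusion $\tp_{(i)}(x^n)<2\alh p(x^n)$ is nevertheless right, because the spurious factor $(b-a)$ you dropped in the next step is exactly the factor you omitted here; they cancel, but the intermediate inequality as written is false.

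Part (iii) is where there is a genuine gap, and it is not cosmetic. You reduce the problem to showing $\E_{\tp}[-\log p(\tilde{X}^n)]\ge nH(X)-\log\gap$, and you acknowledge you do not have a proof of this. In fact this cross-entropy lower bound is \emph{not derivable} from the bounds you have established. The only constraint on the output law that your part~(ii) gives you is the one-sided ratio bound $\tp(x^n)\le 2\alh\,p(x^n)$, and that constraint alone does not prevent $\tp$ from concentrating on the highest-probability sequences, in which case $\E_{\tp}[-\log p(\tilde{X}^n)]$ can fall well below $nH(X)$ (it can be as small as $-\log\max_{x^n}p(x^n)$). So the route through cross-entropy cannot close. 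The two auxiliary claims you invoke along the way are also not right as stated: the correct Kraft-type inequality is $\sum_{x^n}2^{-\lh(x^n)}\ge b-a$ (not $\ge1$), and combining it with $2^{-\lh(x^n)}<2\alh(b-a)p(x^n)$ gives only the vacuous $1<2\alh$, not a lower bound $b-a>1/(2\gap)$.

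The missing idea is the CDF sandwich. The paper observes that the output CDF $\tilde{F}_{(i)}$ satisfies $F(x^n-1)<\tilde{F}_{(i)}(x^n)\le F(x^n)$ for all $x^n$, which follows from Lemmas~\ref{lem_floor}--\ref{lem_uniform} and the fact that the intervals $I(x^n)$ tile $I$ in Gray order. This forces $\tilde{F}_{(i)}^{-1}(r)$ to lie within one step of $F^{-1}(r)$, so one can bound $\lh(\tilde{F}_{(i)}^{-1}(r))\ge\min\{\lh(F^{-1}(r)-1),\lh(F^{-1}(r))\}$ pointwise in $r$ and integrate. The Gray-code property~\eqref{prop_gray} then controls $\max\{p(F^{-1}(r)-1),p(F^{-1}(r))\}\le\gap\,p(F^{-1}(r))$, producing the factor $\gap^2=\alh\cdot\gap$ and hence the $-2\log\gap$ in the bound. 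This converts the problem into a comparison of \emph{code lengths} along matched quantiles rather than a cross-entropy bound, sidestepping the obstruction you correctly identified but could not overcome.
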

In the dual SFEG code
the property \eqref{prop_gray} of Gray code is essentially used to prove
the code is weak $\delta$-perfect for some $\delta\in(0,\infty)$.
This fact contrasts with the relation between SFE code and SFEG code,
where Gray code only contributes to improve the code length by
$(1+\gap)/\gap$.

Now define
$I(x^n)=[\min\{\oF_{I}(x^n-1),F_{I}(x^n-1)\},\allowbreak\min\{\oF_{I}(x^n),F_{I}(x^n)\})$.
We show Lemmas \ref{lem_floor}--\ref{lem_uniform} in the following to prove the theorem.
\begin{lemma}\label{lem_floor}
For any $x^n\in\{0,1\}^n$ it holds that
$\oFI(x^n)\ge
\max\{
F_{I}(x^n-1),\allowbreak
\lfloor F_{I}(x^n)\rfloor_{\lh(x^n+1)}
\}$
and, consequently,
\begin{align}
I(x^n)\subset [\lfloor F_I(x^n-1)\rfloor_{\lh(x^n)},\,\oF_{I}(x^n))\per\label{lem_include}
\end{align}
\end{lemma}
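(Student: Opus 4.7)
The plan is to prove the two displayed bounds on $\oFI(x^n)$ separately, then read off the inclusion \eqref{lem_include} as an easy corollary.

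First I would observe that the bound $\oFI(x^n)\ge F_I(x^n-1)$ is immediate: since $2^{-\lh(x^n)}$ is itself a multiple of $2^{-\lh(x^n)}$, we have $\lfloor F_I(x^n-1)+2^{-\lh(x^n)}\rfloor_{\lh(x^n)} = \lfloor F_I(x^n-1)\rfloor_{\lh(x^n)}+2^{-\lh(x^n)} > F_I(x^n-1)$.

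The more delicate bound is $\oFI(x^n)\ge \lfloor F_I(x^n)\rfloor_{\lh(x^n+1)}$, and here I would invoke Lemma \ref{lem_keta} with $x=F_I(x^n-1)+2^{-\lh(x^n)}$, $l=\lh(x^n)$, $x'=F_I(x^n)$ and $l'=\lh(x^n+1)$. After cancellation this reduces the claim to
\begin{align}
F_I(x^n-1)+2^{-\lh(x^n)}+2^{-\lh(x^n+1)}\ge F_I(x^n)+2^{-\min\{\lh(x^n),\lh(x^n+1)\}}.\n
\end{align}
Using $F_I(x^n)-F_I(x^n-1)=(b-a)p(x^n)$, I would split into cases on the sign of $\lh(x^n)-\lh(x^n+1)$. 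If $\lh(x^n)\ge\lh(x^n+1)$, the inequality collapses to $2^{-\lh(x^n)}\ge (b-a)p(x^n)$, which is immediate from the definition $\lh(x^n)=\lfloor-\log\alh(b-a)p(x^n)\rfloor$ together with $\alh=\rho>1$. The main obstacle is the other case $\lh(x^n)<\lh(x^n+1)$, where the inequality reduces to $2^{-\lh(x^n+1)}\ge (b-a)p(x^n)$; here I would use $2^{-\lh(x^n+1)}\ge\alh(b-a)p(x^n+1)=\rho(b-a)p(x^n+1)$ by definition of $\lh(\cdot)$, and then apply the Gray-code property \eqref{prop_gray}, namely $p(x^n+1)\ge p(x^n)/\rho$, to conclude. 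This is the step that genuinely uses Gray order and is the heart of the argument.

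Finally, for the inclusion \eqref{lem_include}, I would apply the second bound just established at $x^n-1$ in place of $x^n$, which yields $\oFI(x^n-1)\ge \lfloor F_I(x^n-1)\rfloor_{\lh(x^n)}$. Combined with the trivial $F_I(x^n-1)\ge \lfloor F_I(x^n-1)\rfloor_{\lh(x^n)}$, this lower-bounds the left endpoint $\min\{\oFI(x^n-1),F_I(x^n-1)\}$ of $I(x^n)$ by $\lfloor F_I(x^n-1)\rfloor_{\lh(x^n)}$; the right endpoint $\min\{\oFI(x^n),F_I(x^n)\}$ is trivially at most $\oFI(x^n)$, giving the claimed inclusion.
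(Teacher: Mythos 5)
Your proposal is correct and follows essentially the same route as the paper: both reduce the nontrivial bound $\oFI(x^n)\ge\lfloor F_I(x^n)\rfloor_{\lh(x^n+1)}$ via Lemma~\ref{lem_keta} to the inequality $2^{-\max\{\lh(x^n),\lh(x^n+1)\}}\ge (b-a)p(x^n)$ and then close it with the definition of $\lh$ together with the Gray-code property~\eqref{prop_gray}; the paper writes the two cases you handle separately as a single $\max/\min$ chain, but the underlying argument is identical. Your explicit derivation of the inclusion~\eqref{lem_include} (which the paper leaves as a ``consequently'') is a nice addition and is correct.
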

\begin{proof1}
$\oFI(x^n)\ge
F_{I}(x^n-1)$
is straightforward from the definition of $\oFI(x^n)$
and it suffices to show
$\oFI(x^n)\ge\lfloor F_{I}(x^n)\rfloor_{\lh(x^n+1)}$,
which is equivalent to
\begin{align}
\lfloor G+2^{-\lh(x^n)} \rfloor_{\lh(x^n)}
\ge
\lfloor G+(b-a)p(x^n)\rfloor_{\lh(x^n+1)}\com\label{lem_equi}
\end{align}
where $G=F_I(x^n-1)$.
This holds from Lemma \ref{lem_floor} since
\begin{align}
\eqref{lem_equi}&\Leftarrow
G
+2^{-\lh(x^n)}+2^{-\lh(x^n+1)}\nn
&\qquad\qquad\ge G+
(b-a)p(x^n)+2^{-\min\{\lh(x^n),\lh(x^n+1)\}}
\nn
&\Leftrightarrow
2^{-\max\{\lh(x^n),\lh(x^n+1)\}}\ge (b-a)p(x^n)
\nn
&\Leftrightarrow
\alh \min\{p(x^n),p(x^n+1)\}\ge p(\hx^n)
\nn
&\Leftarrow
\alh p(x^n)/\gap\ge p(x^n)\per\dqed\n
\end{align}
\end{proof1}

\begin{lemma}\label{lem_interval}
At each loop of Algorithm \ref{enc_dsfeg},
$\tx^n=x^n$ if and only if
$r\in I(x^n)$.
\end{lemma}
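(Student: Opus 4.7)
The plan is to unfold the definition of the encoder and read off exactly when its output equals a prescribed $x^n$. Since $\hx^n = F_I^{-1}(r) = \min\{y^n : F_I(y^n) > r\}$, we have $r \in [F_I(\hx^n - 1), F_I(\hx^n))$, and the conditional branch in Algorithm \ref{enc_dsfeg} sets $\tx^n = \hx^n$ when $r < \oFI(\hx^n)$ and $\tx^n = \hx^n + 1$ otherwise. Combining these two observations, the event $\tx^n = x^n$ splits into exactly two mutually exclusive subcases: (a) $\hx^n = x^n$ together with $r \in [F_I(x^n - 1), \min\{F_I(x^n), \oFI(x^n)\})$, or (b) $\hx^n = x^n - 1$ together with $r \in [\max\{F_I(x^n - 2), \oFI(x^n - 1)\}, F_I(x^n - 1))$.

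Next I would invoke Lemma \ref{lem_floor} at the index $x^n - 1$, which yields $\oFI(x^n - 1) \ge F_I(x^n - 2)$. Consequently the $\max$ in (b) collapses to $\oFI(x^n - 1)$, so case (b) simplifies to $r \in [\oFI(x^n - 1), F_I(x^n - 1))$, an interval which is empty precisely when $\oFI(x^n - 1) \ge F_I(x^n - 1)$.

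Finally I would merge the two intervals. When $\oFI(x^n - 1) < F_I(x^n - 1)$ they abut at $F_I(x^n - 1)$ and combine into $[\oFI(x^n - 1), \min\{F_I(x^n), \oFI(x^n)\})$; when $\oFI(x^n - 1) \ge F_I(x^n - 1)$ case (b) is empty and only (a) survives, giving $[F_I(x^n - 1), \min\{F_I(x^n), \oFI(x^n)\})$. Either way the union can be written uniformly as $[\min\{\oFI(x^n - 1), F_I(x^n - 1)\}, \min\{\oFI(x^n), F_I(x^n)\}) = I(x^n)$, which is the claimed equivalence. The whole argument is essentially bookkeeping; the one step that requires a little care is the appeal to Lemma \ref{lem_floor}, which is precisely what motivates the $\min$ at the lower endpoint of $I(x^n)$ rather than just $\oFI(x^n - 1)$, and no genuine obstacle is anticipated beyond keeping this case analysis organized.
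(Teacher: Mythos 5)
Your proof is correct and takes essentially the same route as the paper's: decompose the event $\tx^n = x^n$ into the subcases $\hx^n = x^n$ and $\hx^n = x^n - 1$, collapse the lower endpoint of the second interval using $\oFI(x^n-1) \ge F_I(x^n-2)$, and merge the abutting intervals by a case split on whether $\oFI(x^n-1)$ is below or above $F_I(x^n-1)$. One small bonus in your write-up: you correctly attribute the bound $\oFI(x^n-1)\ge F_I(x^n-2)$ to Lemma \ref{lem_floor}, whereas the paper's proof cites Lemma \ref{lem_keta} at that step, evidently a cross-reference mix-up in the source.
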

\begin{proof}
Since $\hx^n =x^n$ if and only if
$r \in [F_I(x^n-1),F_I(x^n))$,
we have $\tx^n=x^n$ if and only if
\begin{align}
&\{r\in [F_I(x^n-2), F_I(x^n-1)),\allowbreak r\ge \oFI(x^n-1)\} \mbox{ or } \nn
&\{r\in [F_I(x^n-1), F_I(x^n)),\,r< \oFI(x^n)\}\per\label{cond12}
\end{align}
Since $\oFI(x^n-1)\ge F_I(x^n-2)$ holds from Lemma \ref{lem_keta},
\eqref{cond12} is equivalent to
\begin{align}
&\oFI(x^n-1)\le r < F_I(x^n-1)\mbox{ or } \nn
&F_I(x^n-1)\le r < \min\{F_I(x^n),\,\oFI(x^n)\}\per\label{cond22}
\end{align}

We can easily show that $\eqref{cond22}$ is equivalent to
$r\in I(x^n)$ by considering
cases $\oFI(x^n-1)\le F_I(x^n-1)$
and $\oFI(x^n-1)> F_I(x^n-1)$ separately.
\end{proof}

\begin{lemma}\label{lem_uniform}
At Step \ref{step_r} in each loop of Algorithm \ref{enc_dsfeg},
$r$ is uniformly distributed over $I$.
\end{lemma}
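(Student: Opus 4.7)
The plan is to establish Lemma \ref{lem_uniform} by induction on the iteration index $i$ of the loop in Algorithm \ref{enc_dsfeg}, with inductive hypothesis that, conditional on the outputs $\tx_{(1)}^n,\ldots,\tx_{(i-1)}^n$ produced so far, the value $r$ at Step \ref{step_r} is uniformly distributed over the current $I=[a,b)$. The base case $i=1$ is immediate: $I=[0,1)$ and $r=0.u_1u_2\cdots$ is uniform on $[0,1)$ because $u^\infty$ is i.i.d.\ uniform.

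For the inductive step, I would first invoke Lemma \ref{lem_interval}, which identifies the event $\{\tx^n=x^n\}$ with $\{r\in I(x^n)\}$. Because the sets $\{I(y^n)\}_{y^n}$ partition $I$ (every $r\in I$ produces a unique $\tx^n$ in the encoder), conditioning the uniform law on $I$ by the output $\tx^n=x^n$ yields that $r$ is uniform on $I(x^n)$. Next, I would apply Lemma \ref{lem_floor} together with the identity $\oFI(x^n)=\lfloor F_I(x^n-1)\rfloor_{\lh(x^n)}+2^{-\lh(x^n)}$ to deduce that every point of $I(x^n)$ shares the same initial $\lh(x^n)$ bits, namely those of $\lfloor F_I(x^n-1)\rfloor_{\lh(x^n)}$. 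Consequently, the truncation map $r\mapsto\res{r}{\lh(x^n)}$ acts on $I(x^n)$ as an affine bijection with scale $2^{\lh(x^n)}$, sending $I(x^n)$ onto the interval $\big[\res{\min\{\oFI(x^n-1),F_I(x^n-1)\}}{\lh(x^n)},\,\res{\min\{\oFI(x^n),F_I(x^n)\}}{\lh(x^n)}\big)$, which is exactly the new $I$ assigned by the algorithm. Since affine bijections preserve the uniform law, and since $\res{r}{\lh(x^n)}=0.u_{j+\lh(x^n)}u_{j+\lh(x^n)+1}\cdots$ is precisely the value of $r$ at the start of iteration $i+1$, the induction closes.

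The step that I expect will require the most care is the arithmetic identity $\oFI(x^n)=\lfloor F_I(x^n-1)\rfloor_{\lh(x^n)}+2^{-\lh(x^n)}$, since adding $2^{-\lh(x^n)}$ inside a floor at precision $\lh(x^n)$ could in principle interact with the fractional tail of $F_I(x^n-1)$. Writing $F_I(x^n-1)=\lfloor F_I(x^n-1)\rfloor_{\lh(x^n)}+\epsilon$ with $\epsilon\in[0,2^{-\lh(x^n)})$ one sees that no unexpected carry occurs, so the identity holds; once this is in hand the remainder of the argument is routine bookkeeping.
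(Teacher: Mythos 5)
Your proof is correct and follows essentially the same route as the paper's: an induction on the loop index, using Lemma~\ref{lem_interval} to condition the uniform distribution on $I$ down to uniformity on $I(x^n)$, and Lemma~\ref{lem_floor} (via the containment \eqref{lem_include}) to conclude that $I(x^n)$ lies inside a single dyadic interval of length $2^{-\lh(x^n)}$, so that dropping the first $\lh(x^n)$ bits is an affine bijection onto the new $I$. The explicit identity $\oFI(x^n)=\lfloor F_I(x^n-1)\rfloor_{\lh(x^n)}+2^{-\lh(x^n)}$ that you single out is a nice way to make precise the paper's terser statement that ``the first $\lh(\tx^n)$ bits of $r\in I(\tx^n)$ are unique,'' and your carry argument for it is correct.
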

\begin{proof}
In the first loop $r$ is uniformly distributed over $I=[0,1)$.
Assume that $r$ is uniformly distributed over $I$ at some loop.
Then, given $\tx^n$ is sent,
$r$ is uniformly distributed over
$I(\tx^n)$
from Lemma \ref{lem_interval}.
Here we have from Lemma \ref{lem_floor} that $I(\tx^n)\subset [\lfloor F_I(\tx^n-1)\rfloor_{\lh(\tx^n)},\,\oF_{I}(\tx^n))$,
which implies that the first $\lh(\tx^n)$ bits of $r=0.u_{j}u_{j+1}\cdots\in I(\tx^n)$ are unique.
As a result, $0.u_{j+\lh(\tx^n)}u_{j+\lh(\tx^n)+1}\cdots$ is uniformly distributed over
$\big[\res{\min\{\oF_{I}(\tx^n-1),F_{I}(\tx^n-1)\}}{\lh(\tx^n)},\,
\res{\min\{\oF_{I}(\tx^n),F_{I}(\tx^n)\}}{\lh(\tx^n)}\big)$, which means that
$r$ is also uniformly distributed over $I$ in the next loop.
\end{proof}

\begin{proof2}{of Theorem \ref{thm_dual}}
(i)
By the argument in the proof of Lemma \ref{lem_uniform},
$(u_{j},u_{j+1},\,\cdots,\,u_{j+\lh(\tx_{(i)}^n)-1})$ is given as
$\lfloor F_I(\tx_{(i)}^n-1)\rfloor_{\lh(\tx_{(i)}^n)}$, which implies the unique decodability.


(ii)
Let $\tp_{(i)}(x^n)$ be the probability of the event $\tx_{(i)}^n=x^n$.
Then it holds from \eqref{lem_include} and Lemmas \ref{lem_interval} and \ref{lem_uniform} that
\begin{align}
\tp_{(i)}(x^n)
\le \frac{2^{-\lh(x^n)}}{b-a}
= \frac{2^{-\lfloor -\log \alh (b-a)p(x^n)\rfloor}}{b-a}
&< 2\alh p(x^n)\per\n
\end{align}
Therefore, letting $\tilde{X}^k$ be the first $k$ bits of the output sequence
$\tx_{(1)}^n,\tx_{(2)}^n,\cdots$ for the uniform input $U^{\infty}$,
we have
\begin{align}
\limsup_{k\to\infty}\frac1k \sup_{x^k}\log\frac{P_{\tilde{X}^k}(\tilde{X}^k)}{P_{X^k}(\tilde{X}^k)}
&\le
\limsup_{k\to\infty}\frac{1}{k} \log (2\alh)^{\lceil k/n\rceil}
\nn
&=
\frac{1}{n}
\log 2\alh\com\n
\end{align}
that is, the dual SFEG code is weakly $(\frac1n\log2\alh)$-perfect.

(iii) Let $\tilde{F}_{(i)}(x^n)$ be the cumulative distribution function
induced by $\tp_{(i)}(x^n)$.
Then
$F(x^n-1)<\tilde{F}_{(i)}(x^n)\le F(x^n)$ holds
and therefore
\begin{align}
\E[\lh(\tilde{X}_{(i)}^n)]
&=
\int_0^1 \lh(\tF_{(i)}^{-1}(r))\rd r\nn
&\ge
\int_0^1 \min\{\lh(F^{-1}(r)-1),\lh(F^{-1}(r))\}\rd r\nn
&=
\int_0^1
\lfloor -\log \alh \max\{p(F^{-1}(r)-1),p(F^{-1}(r))\}\rfloor \rd r\nn
&\ge
\int_0^1
\lfloor -\log \gap^2 p(F^{-1}(r))\rfloor \rd r\nn
&=
\sum_{x^n}p(x^n)\lfloor -\log \gap^2 p(x^n)\rfloor\nn
&>
nH(X)-1-2\log \gap\per\n
\dqed
\n
\end{align}
\end{proof2}

\bibliographystyle{IEEEtran}


%

\end{document}